\pgfplotsset{
compat=1.13,
legend style={font=\footnotesize, fill opacity=0.7,  draw opacity=1, text opacity=1, draw=white!15!black, legend cell align=left, align=left},
width=6cm, 
height=6cm,
yminorticks=false,
xminorticks=false,
label style={font=\small},
title style={font=\small},
tick style={color=black},
tick align=outside,
tick pos=left,
tick label style={font=\footnotesize},
grid style={line width=.1pt, draw=gray!20},
major grid style={line width=.1pt,draw=gray!20},
}
\newacronym{3d}{3D}{three dimensional}
\newacronym{aoa}{AoA}{angle of arrival}
\newacronym{ack}{ACK}{acknowledgment}
\newacronym{aod}{AoD}{angle of departure}
\newacronym{ap}{AP}{access point}
\newacronym{awgn}{AWGN}{additive white gaussian noise}
\newacronym{b5g}{B5G}{Beyond-5G}
\newacronym{bsw}{CB-BSW}{codebook-based beam sweeping}
\newacronym[plural=BSs, firstplural=base stations (BSs)]{bs}{BS}{base station}
\newacronym{ctrl}{ctl}{control}
\newacronym{cc}{CC}{control channel}
\newacronym{cdf}{CDF}{cumulative density function}
\newacronym{ce}{CE}{channel estimation}
\newacronym{cmos}{CMOS}{complementary metal-oxide semiconductor}
\newacronym{clt}{CLT}{central limit theorem}
\newacronym{cpu}{CPU}{central processing unit}
\newacronym{oce}{OPT-CE}{optimization based on channel estimation}
\newacronym{obcc}{OB-CC}{out-of--band \gls{cc}}
\newacronym{csi}{CSI}{channel state information}
\newacronym{dc}{DC}{data channel}
\newacronym{dl}{DL}{downlink}
\newacronym{dft}{DFT}{discrete Fourier transform}
\newacronym{doa}{DoA}{direction-of-arrival}
\newacronym{emf}{EMF}{electromagnetic field}
\newacronym{em}{EM}{electromagnetic}
\newacronym{es}{ES}{edge server}
\newacronym{fdd}{FDD}{frequency-division duplexing}
\newacronym{fdm}{FDM}{frequency division multiplexing}
\newacronym{fp}{FP}{fractional program}
\newacronym[plural=HRISs, firstplural=Hybrid Reconfigurable Intelligent Surfaces (HRISs)]{hris}{HRIS}{hybrid reconfigurable intelligent surface}
\newacronym{ibcc}{IB-CC}{in-band \gls{cc}}
\newacronym{ibno}{IB-no}{IB-no}
\newacronym{ibwf}{IB-wf}{IB-wf}
\newacronym{ios}{IoS}{Internet-of-Surfaces}
\newacronym{iot}{IoT}{Internet-of-Things}
\newacronym{iid}{i.i.d.}{independently identically distributed}
\newacronym[plural=KPIs, firstplural=key performance indicators (KPIs)]{kpi}{KPI}{key performance indicator}
\newacronym{lf}{LF}{low frequency}
\newacronym{ls}{LS}{least squares}
\newacronym{los}{LoS}{line-of-sight}
\newacronym{mcs}{MCS}{modulation and coding scheme}
\newacronym{mec}{MEC}{Mobile Edge Computing}
\newacronym{mimo}{MIMO}{multiple-input multiple-output}
\newacronym{miso}{MISO}{multiple-input single-output}
\newacronym{ml}{ML}{machine learning}
\newacronym{mmse}{MMSE}{minimum mean squared error}
\newacronym{mrt}{MRT}{maximum-ratio transmission}
\newacronym{mse}{MSE}{mean squared error}
\newacronym{nlos}{NLoS}{non-line-of-sight}
\newacronym{qos}{QoS}{quality of service}
\newacronym{ofdm}{OFDM}{orthogonal frequency-division multiplexing}
\newacronym{pdf}{pdf}{probability distribution function}
\newacronym{pla}{PLA}{planar linear array}
\newacronym{pap}{P\&P}{plug-and-play}
\newacronym{ppp}{PPP}{Poisson point process}
\newacronym{ra}{RA}{resource allocation}
\newacronym[plural=RISs, firstplural=Reconfigurable Intelligent Surfaces (RISs)]{ris}{RIS}{reconfigurable intelligent surface}
\newacronym{risc}{RISC}{\gls{ris} controller}
\newacronym{rf}{RF}{radio frequency}
\newacronym{rmse}{RMSE}{root-mean-square error}
\newacronym{rss}{RSS}{received signal strength}
\newacronym{se}{SE}{spectral efficiency}
\newacronym{sdp}{SDP}{semidefinite programming}
\newacronym{sdr}{SDR}{semidefinite relaxation}
\newacronym{simo}{SIMO}{single-input multiple-output}
\newacronym{sinr}{SINR}{signal-to-interference-plus-noise ratio}
\newacronym{smse}{SMSE}{sum mean squared error}
\newacronym{snr}{SNR}{signal-to-noise ratio}
\newacronym{soa}{SoA}{state-of-the-art}
\newacronym{sre}{SRE}{smart radio environment}
\newacronym{toa}{ToA}{time-of-arrival}
\newacronym{tdd}{TDD}{time division multiplex}
\newacronym{tti}{TTI}{transmission time interval}
\newacronym[plural=UEs, firstplural=user equipments (UEs)]{ue}{UE}{user equipment}
\newacronym{ul}{UL}{uplink}
\newacronym{ula}{ULA}{uniform linear array}
\newacronym{urllc}{URLLC}{ultra-reliable low-latency communications}
\newcommand{\ie}{\emph{i.e.}}     
\newcommand{\eg}{\emph{e.g.}}     
\newcommand{\T}{^{\mathsf{T}}}     
\renewcommand{\H}{^{\mathsf{H}}}   
\newcommand{\mc}[1]{\mathcal{#1}}   
\newcommand{\mb}[1]{\mathbf{#1}}    
\newcommand{\alg}{_\mathrm{alg}} 
\newcommand{\ini}{_\mathrm{ini}} 
\newcommand{\ce}{_\mathrm{ce}}
\newcommand{\ra}{_\mathrm{ra}}
\newcommand{\set}{_\mathrm{set}}
\newcommand{\pay}{_\mathrm{pay}}
\newcommand{\ctl}{\mathrm{ctl}}
\definecolor{gold}{rgb}{1.0, 0.84, 0.0}
\definecolor{green2}{rgb}{0.0, 1.0, 0.0}
\definecolor{purple2}{rgb}{0.63, 0.36, 0.94}
\definecolor{amaranth}{rgb}{0.9, 0.17, 0.31}
\definecolor{cadmiumgreen}{rgb}{0.0, 0.42, 0.24}
\definecolor{gold}{rgb}{0.85,.66,0}
\newtheorem{proposition}{Proposition}
\begin{document}
\title{Control Aspects for Using RIS in Latency-Constrained Mobile Edge Computing}

\author{        
    \parbox{\textwidth}{\centering
      Fabio Saggese$^\dagger$,
      Victor Croisfelt$^\dagger$,
      Francesca Costanzo$^{\circ}$ ,
      Junya Shiraishi$^\dagger$,
      Rados\l{}aw Kotaba$^\dagger$, \\
      Paolo Di Lorenzo$^\circ$, 
      and Petar Popovski$^\dagger$       
  }%
  \\
  \parbox{\textwidth}{
    \centering
    $^\dagger$Department of Electronic System, Aalborg University, Denmark \\
    $^\circ$Department of Information Engineering, University of Rome ``La Sapienza'', Italy \\  
    $^\circ$Consorzio Nazionale Interuniversitario per le Telecomunicazioni (CNIT), Parma, Italy \\  
    email: $^\dagger$\{fasa, vcr, jush, rak, petarp\}@es.aau.dk,
    $^\circ$\{francesca.costanzo, paolo.dilorenzo\}@uniroma1.it   
  }%
    \thanks{This work was partly supported by the Villum Investigator grant ``WATER'' from the Villum Foundation, Denmark, and by the EU H2020 RISE-6G project under grant number 101017011. (Corresponding author: Fabio Saggese, email: fasa@es.aau.dk)
    }
}

\maketitle

\begin{abstract}    
    This paper investigates the role and the impact of control operations for dynamic mobile edge computing (MEC) empowered by Reconfigurable Intelligent Surfaces (RISs), in which multiple devices offload their computation tasks to an access point (AP) equipped with an edge server (ES), with the help of the RIS. While usually ignored, the control aspects related to channel estimation (CE), resource allocation (RA), and control signaling play a fundamental role in the user-perceived delay and energy consumption. In general, the higher the resources involved in the control operations, the higher their reliability;
    however, this introduces an overhead, which reduces the number of resources available for computation offloading, possibly increasing the overall latency experienced. Conversely, a lower control overhead translates to more resources available for computation offloading but impacts the CE accuracy and RA flexibility. This paper establishes a basic framework for integrating the impact of control operations in the performance evaluation of the RIS-aided MEC paradigm, clarifying their trade-offs through theoretical analysis and numerical simulations.
\end{abstract}

\begin{keywords}{
    Reconfigurable Intelligent Surfaces, mobile edge computing, end-to-end latency, dynamic queue, control channels}
\end{keywords}

\section{Introduction}
\label{sec:intro}
Beyond 5G networks are fundamental enablers of new services and applications (including verticals), such as Industry 4.0, \gls{iot}, and autonomous driving \cite{ahmadi20195g,6Gstrinati}, which typically require massive data processing, high reliability, and low end-to-end delays. In this context, a key technical enabler is \gls{mec}, whose aim is to allow proximity access to (albeit limited) cloud functionalities (\eg, computing and storage resources) at the edge of the wireless network~\cite{barbarossa2014communicating}. In \gls{mec}, \glspl{ue} offload heavy computational tasks to nearby processing units or \glspl{es}, typically placed close to \glspl{ap} of the radio access network, to improve energy consumption, reduce latency, and/or running sophisticated applications that would be impossible to execute at the \gls{ue} side. However, 
the available network resources must be jointly optimized and orchestrated to provide the end users with a satisfactory \gls{qos}. This problem has sparked a lot of interest in recent literature, with several works that have proposed joint resource optimization frameworks in both static and dynamic \gls{mec} scenarios, \eg,~\cite{barbarossa2014communicating,Chen2019,Merluzzi2020URLLC,HanChen2020,merluzzi2021discontinuous}.  

In \gls{mec} systems, good wireless connectivity is a necessary condition for ensuring the required \gls{qos}. In the presence of poor wireless channel conditions, the performance of \gls{mec} systems might be severely hindered because of lower offloading rates and consequent under-exploitation of the computation capabilities of the \gls{es}. 
A substantial performance boost can be achieved by exploiting \glspl{ris}, an emerging technology that has gained significant attention in recent years due to its ability to shape and control the wireless communication environment~\cite{bjornson2022reconfigurable}. To this aim, several works have already investigated the optimization of \gls{ris}-aided \gls{mec} systems, considering both static and dynamic computation offloading~\cite{bai2021reconfigurableWC, chu2020intelligent, huang2021reconfigurable, hu2021reconfigurable, DiLorenzo2022}. These methods aim to jointly optimize the \gls{ris} configuration, the \gls{ap} and \glspl{ue} communication parameters, and the \gls{es} \gls{cpu} resources to provide a target \gls{qos}. 
In~\cite{chu2020intelligent}, the authors focus on maximizing the number of processed bits for computation offloading. 
Furthermore,~\cite{huang2021reconfigurable} explores the utilization of \glspl{ris} to enhance the performance of machine learning tasks at the \gls{es}. 
\cite{hu2021reconfigurable} suggests optimization-based and data-driven solutions for \gls{ris}-aided multi-user \gls{mec}, aiming to maximize the total completed task-input bits of all \glspl{ue} within limited energy budgets. 
Lastly,~\cite{DiLorenzo2022} delves into the dynamic optimization of a \gls{ris}-empowered dynamic \gls{mec} scenario in which the \glspl{ue} continuously generate data for offloading while wireless channel conditions evolve through time. The optimization aims to ensure minimum energy consumption under an average latency constraint.

To the best of our knowledge, the control operations needed for using \glspl{ris} in \gls{mec} systems and their impact on the \gls{qos} experienced by the end user has never been investigated within the existing literature. The analysis of those procedures is central to enabling \gls{ris}-aided \gls{mec} services in future wireless networks. Among the first works on the subject is~\cite{saggese2023impact}, which proposes a general method to include the control aspects in the performance evaluation of \gls{ris}-aided networks. By using similar considerations, this work defines and quantifies the impact of control operations on the performance of \gls{ris}-empowered \gls{mec} systems by focusing on the trade-offs and interactions between the amount of resources invested in the control procedures \emph{vs} data transmission and computation offloading at the \gls{mec}. The analysis uses the \gls{mec} offloading strategy proposed by~\cite{DiLorenzo2022}. Our results show that control errors may disrupt the \gls{mec} service if not properly considered during the system design. While the results underline the necessity for high reliability of the control signaling, it is seen that the \gls{ce} procedure strongly degrades the performance. On one side, its high overhead reduces the end users' \gls{qos}, limiting the time reserved for offloading operations; on the other, errors in \gls{csi} estimation strongly affect the \gls{ra}, and thus the optimality of the allocated resources.

\paragraph*{Notation}
Boldface lower, $\mb{x}$, and capital, $\mb{X}$, letters  denote vector and matrices, respectively; $\mb{0}_n$ is an $n$-size zero vector; $\mb{I}_n$ is the $n\times n$ identity matrix; $\lVert\cdot\rVert$ is the Euclidean norm; $\circ$ is the element-wise multiplication. Calligraphic letters, $\mc{A}$, denote sets; $|\cdot|$ denotes the absolute value if applied to a scalar or the cardinality if applied to a set. 
$\mc{CN}(\bm{\mu}, \bm{\Sigma})$ is the complex Gaussian distribution with mean value $\bm{\mu}$ and covariance matrix $\bm{\Sigma}$; $\mathbb{E}_{\sim x}[\cdot]$ is the expected value w.r.t. $x$.

\section{System Model}
\label{sec:model}
\begin{figure}
    \centering
    \includegraphics[width=\columnwidth]{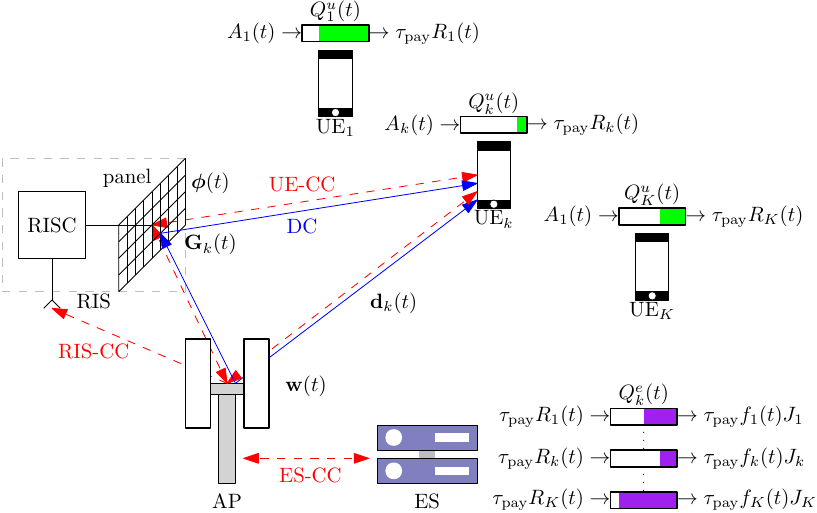}
    \caption{
    \Gls{ris}-aided \gls{mec} system of interest: $K$ single-antenna \glspl{ue} offload their data to an \gls{es}, connected to a multi-antenna \gls{ap} aided by a \gls{ris}, by transmitting a portion of their local queues $Q_k^u(t)$ in each slot $t$. The \gls{es} stores the data in remote queues $Q^e_k(t)$ and performs computation over them using CPU frequencies $f_k(t)$. To account for the control operations that dictate how the data is offloaded, a \textcolor{blue}{data channel (DC)} and three \textcolor{red}{control channels (CCs)} are considered.
    }
    \label{fig:scenario}
    \vspace{-0.5cm}
\end{figure}

We consider the \gls{ris}-aided \gls{mec} system depicted in Figure~\ref{fig:scenario}, where $K$ single-antenna \glspl{ue} want to offload computational-heavy tasks to an \gls{es} endowed with an $M$-antenna \gls{ap} and aided by an \gls{ris} equipped with $N$-elements. 
For simplicity of notation, we collect the sets of \glspl{ue} and \gls{ap} antennas in $\mc{K}=\{1, \dots, K\}$ and $\mc{M} = \{1, \dots, M\}$, respectively. We assume that the \glspl{ue} are multiplexed according to a \gls{fdm} technique, enabling them to transmit or receive simultaneously without interfering with each other~\cite{DiLorenzo2022}. Then, we let $B$ be the total bandwidth allocated for communication between the \gls{ap} and the \glspl{ue}. We denote as $B_k$ the bandwidth allocated to the $k$-th \gls{ue}, such that $B = \sum_{k\in\mc{K}} B_k$.

Similar to~\cite{DiLorenzo2022}, we consider that the offloading is dynamic; that is, the \glspl{ue} continuously generate data that need to be processed by the \gls{es}. Hence, the system ends when the offloaded data of all \glspl{ue} are entirely processed by the \gls{es}. In such a dynamic system, time is organized into \emph{slots} of equal duration $\tau\in\mathbb{R}_{+}$, indexed by $t\in\{1,2,\dots\}$. A slot is then further divided into control and payload parts as
\begin{equation}
    \tau = \tau_\ctl + \tau\pay.
    \label{eq:slot}
\end{equation}
Within the control time or \textit{control overhead} $\tau_\ctl$, \textit{control operations} are performed, which include: a) the \gls{ce} used to obtain the \gls{csi} at the \gls{ap}, b) the \gls{ra} carried out by the \gls{es} and used to optimize the transmission parameters for the \glspl{ue}, the processing parameters at the \gls{es}, and the \gls{ris} properties, and c) the control signaling used to exchange relevant information among all the nodes involved. The payload time $\tau\pay$ comprises the actual transmission/offloading of data by the \glspl{ue}. 

We aim to \emph{minimize the total energy consumption of the system while keeping the offloading latency perceived by the \glspl{ue} under a threshold}. In Section~\ref{sec:protocol}, we give detailed information on the protocol that dictates how the offloading service occurs on a slot basis when achieving this goal and considering the control operations. Below, we introduce the notation and models required to describe the protocol.

\subsection{AP and RIS modeling}
\label{sec:ris-model}
The \gls{ap} enforces the synchronization of the system and controls the behavior of the \gls{ris}~\cite{DiLorenzo2022}. The \gls{ap} is assumed to have analog beamforming capabilities, \ie, it can load a beamforming vector $\mb{w}(t)\in\mathbb{C}^M$ with $\lVert \mb{w}(t) \rVert = 1$ to focus its transmission at the $t$-th slot. The set of possible available beamforming vectors forms the \gls{ap} codebook denoted as $\mc{C}_\mathrm{ap}$, modeled from accurate antenna patterns as in~\cite{APcodebook}, such that $\mb{w}(t)\in\mc{C}_\mathrm{ap}$. 

The \gls{ris} device is made of a \gls{risc} and a \gls{ris} panel~\cite{saggese2023impact}. The former receives control information from the \gls{ap} and commands the changes of the electromagnetic properties of the latter. If activated, each element of the \gls{ris} panel can be tuned to impose a \emph{phase-shift} variation on the impinging wave. At every slot $t$, the set of the states of all the elements forms the so-called \emph{\gls{ris} configuration}, denoted as
\begin{equation} \label{eq:phi}
    \bm{\phi}(t) = [\alpha_1(t) e^{j\phi_1(t)}, \dots,  \alpha_N(t) e^{j\phi_N(t)}]\T \in\mathbb{C}^N,    
\end{equation}
where $\alpha_n(t) \in \{0,1\}$ and $\phi_n(t)\in\left\{\frac{2 i \pi}{2^b}\right\}_{i=0}^{2^b-1}$ represent the activation state and the ($b$-bit quantized) phase shift of the $n$-th \gls{ris} element, respectively. Without loss of generality, we assume the \gls{ris} needs $\tau_s$ seconds to load a new configuration due to its hardware characteristics. Moreover, we assume the \gls{risc} has a single antenna with a single receiver \gls{rf} chain to process the incoming control information from the \gls{ap}. In particular, the \gls{ris} has a \emph{wide-width beam \gls{ctrl} configuration} $\bm{\phi}_\ctl\in\mathbb{C}^{N}$ that can boost the overall coverage of the \gls{ap} in a certain area of interest where the \glspl{ue} might be located. We assume the \gls{ris} always loads the control configuration when control signaling is exchanged between \gls{ap} and \glspl{ue}. The \gls{ris} is also equipped with a \emph{\gls{ce} codebook} $\mc{C}\ce$ that contains $|\mc{C}\ce|=C\ce$ pre-defined configurations $\boldsymbol{\phi}\in\mc{C}\ce$ used during the \gls{ce} procedure, as defined in~\cite{Swindlehurst2022ce,Wang2020ce,Mo2023ce}. The \gls{ce} procedure is presented in Section~\ref{sec:protocol}.

\subsection{Channel modeling and control signaling}
\label{sec:channel-model}
As shown in Fig.~\ref{fig:scenario}, we make use of four different channels: a \gls{dc} used by the \glspl{ue} to transmit data to the \gls{ap} and three different \glspl{cc} -- \gls{ue}-\gls{cc}, \gls{ris}-\gls{cc}, and \gls{es}-\gls{cc} -- used to exchange the control information of the \glspl{ue}, the \gls{ris}, and the \gls{es}, respectively. In particular, the \gls{ue}-\gls{cc} represents the physical \gls{dl} and \gls{ul} \gls{cc} between the \gls{ap} and the \glspl{ue} (\emph{e.g.}, PDCCH/PUCCH in 5G~\cite{3gpp:nr}), which is used for informing about scheduling, reporting channel quality, transmitting \gls{ack} signals, etc. The \gls{ris}-\gls{cc} enables the \gls{ap} to control the \gls{ris} behavior. The \gls{es}-\gls{cc} coordinates the exchange of information over the backhaul link connecting the \gls{ap} and the \gls{es}. In general, we use the \emph{far-field assumption} when modeling the wireless channels, as in~\cite{DiLorenzo2022, saggese2023impact, croisfelt2023access}. Moreover, we assume a \textit{block-fading model}, \ie, the wireless channels are static and frequency-flat within each slot. The choice of the slot duration $\tau$ can be thus related to the coherence time of the channel.

We assume that control signaling occurs through the transmission of \textit{control packets} over the \gls{ue}-\gls{cc} and \gls{ris}-\gls{cc}. A control packet has a duration of $T$ seconds, defined as the system's minimum \gls{tti}.


\subsubsection{Data Channels (DCs)}
Due to \gls{fdm}, the $k$-th \gls{ue} transmits data to the \gls{ap} through its own \gls{dc}, which is a \gls{ul} channel operating with bandwidth $B_k$. 
The \gls{dc} channel is denoted as $\mb{h}_{k}(t, \bm{\phi}(t)) \in \mathbb{C}^{M}$ being a function of the slot $t$ and the corresponding \gls{ris} configuration loaded. It can be written as
\begin{equation} \label{eq:channel}
    \mb{h}_{k}(t, \mb{\phi}(t)) = \mb{d}_k(t) + \mb{G}_k(t) \bm{\phi}(t),
\end{equation}
where $\mb{d}_k(t)\in\mathbb{C}^{M}$ and $\mb{G}_k(t)\in\mathbb{C}^{M \times N}$ represent the direct \gls{ue}-\gls{ap} and the equivalent reflected \gls{ue}-\gls{ris}-\gls{ap} channel coefficients, respectively.

\subsubsection{ES Control Channel (ES-CC)}
The \gls{es}-\gls{cc} is considered to be an \gls{obcc}, \ie, the resources used for this channel are orthogonal w.r.t. the ones used for the \gls{dc}, resulting in an instantaneous and error-free channel. This is because a system designer can easily make this \gls{cc} as reliable as possible due to the high available pool of resources. Moreover, the \gls{es} and \gls{ap} are most likely co-located and cable-connected in our scenario.

\subsubsection{UE Control Channel (UE-CC)}\label{sec:channel-model:ue-cc}
The \gls{ue}-\gls{cc} of the $k$-th \gls{ue} works over both \gls{dl} and \gls{ul} directions and is an \gls{ibcc}, \ie, its operational frequency and bandwidth overlap with the resources used for the \gls{dc} of the $k$-th \gls{ue}. This is considered to let the control signaling sent by the \gls{ap} be aided by the \gls{ris} using the \gls{ctrl} configuration. Moreover, we assume control signaling is performed by exchanging control packets transmitted to or from the \glspl{ue}, whose packets are sent using bandwidth $B_k$.

\subsubsection{RIS Control Channel (RIS-CC)} 
We assume that the \gls{ap} controls the \gls{ris} over a wireless \gls{dl} connection. We analyze the \gls{ris}-\gls{cc} in the \gls{ibcc} condition that considers that the resources of this channel 
overlap with the ones of the \glspl{dc}. Thus, control signaling for \gls{ris} control cannot co-occur with transmissions for/from \glspl{ue}. 
It is assumed that control signaling on this \gls{cc} is performed through control packets transmitted by the \gls{ap} over bandwidth $B$. 

%
\subsection{Dynamic queuing modeling}
\label{sec:queue-model}
Following~\cite{DiLorenzo2022}, we consider a dynamic queuing system, where each \gls{ue} has a \emph{local communication queue} to buffer data to be transmitted/offloaded, while the \gls{es} has a \emph{remote computation queue} to buffer data to be processed. We present the general update rules of these queues below. 

Let $\tilde{R}_{k}(t)$ denote the \emph{actual throughput} of the $k$-th \gls{ue} at slot $t$, $k\in\mathcal{K}$. We denote as $Q^{u}_k(t)$ the local communication queue of the $k$-th \gls{ue}, updated as follows
\begin{equation}
    Q_{k}^u(t+1) = \max[0, Q_{k}^u(t) - \tau\pay \tilde{R}_{k}(t)] + A_{k}(t),
    \label{eq:dynamic:local-queue-update}
\end{equation}
where $A_k(t)$ [bits] is the realization of the data arrival process at slot $t$.

Let $f_{\max}$ be the total \gls{cpu} frequency of the \gls{es}. Let $Q^e_{k}$ denote the remote computation queue for the $k$-th \gls{ue} at the \gls{es}. We assume that the \gls{es} allocates computational resource $f_k(t)$ to process the task offloaded by the $k$-th \gls{ue} at slot $t$, such that $\sum_{k\in\mc{K}} f_k(t) = f_{\max}$. Thus, the remote computation queue for the $k$-th \gls{ue} at the \gls{es} evolves as
\begin{equation}
    \begin{aligned}
    Q^e_{k}(t+1) = &\max[0, Q^e_{k}(t) -  \tau\pay f_{k}(t) J_k] \\
    &+ \min \left( Q^u_{k}(t), \tau\pay \tilde{R}_{k}(t)\right),
    \end{aligned}
    \label{eq:dynamic:remote-queue-update}
\end{equation}
where $J_k$ [bit/Hz/s] is an efficiency parameter that depends on the application offloaded by the $k$-th \gls{ue}. 

The \emph{total queue} for the $k$-th \gls{ue} can be described as
\begin{equation} 
    Q_{k}(t)=Q_{k}^{u}(t)+Q_{k}^{e}(t).
    \label{eq:dynamic:total-queue}
\end{equation}
The \emph{average offloading latency} perceived by the $k$-th \gls{ue} is
\begin{equation} \label{eq:latency}
    L_k(t) = Q_{k}(t) / \bar{A}_k,
\end{equation}
where $\bar{A}_k = \mathbb{E}_{\sim t}[A_k(t)] / \tau$ is the average arrival rate [kbps]. 

\subsection{Energy consumption modeling}
\label{sec:energy-model}
The total energy consumption in the system is modeled as a weighted sum of the energy consumed by the \gls{es}, \gls{ap}, \gls{ris}, and the \glspl{ue} at each slot, denoted as $E_e(t)$, $E_a(t)$, $E_r(t)$, and $E_{k}(t)$, respectively. Based on~\cite{DiLorenzo2022}, we have the following total energy consumption at slot $t$
\begin{equation} \label{eq:total-energy}
    E_{\sigma}^\mathrm{tot}(t) =  \sigma \sum_{k\in\mc{K}} E_{k}(t) + (1-\sigma) (E_e(t) + E_a(t) + E_r(t)),
\end{equation}
where $\sigma \in [0,1]$ is a weighting parameter striking a trade-off between \gls{ue} and network consumption; choosing $\sigma =1$ leads to a pure user-centric strategy; whereas, $\sigma =0$ induces a pure network-centric strategy. Below, we model each one of the energy terms, where, different from~\cite{DiLorenzo2022}, we introduce the energy consumption related to control operations.

The energy consumption at the \gls{es} depends on the computation tasks over offloaded data and control operations. Modeling the energy spent for computation using a \gls{cmos}-based \gls{cpu}~\cite{burd1996processor}, we obtain that the \gls{es} consumes
\begin{equation}
    E_e(t) = \tau\pay\gamma_{c}f_{\max}^3 + E_e^\ctl(t),
\end{equation}
where $\gamma_{c}$ is the effective switching capacitance of the \gls{es} processor, and $E_e^\ctl$ is the \gls{es} energy consumption related to control.

The energy spent by the $k$-th \gls{ue} depends on the data transmission and control operations and is given by
\begin{equation}
    E_{k}(t) = \tau\pay P_k(t) + E_k^\ctl(t),
\end{equation}
where $P_k(t)$ is the power spent by the $k$-th \gls{ue} for data transmission, and $E^\ctl_{k}(t)$ is the $k$-th \gls{ue} energy consumption related to control for $k\in\mathcal{K}$.

Since we consider \gls{ul} data transmissions, the \gls{ap} spends energy only for control operations, given by
\begin{equation}
    E_a(t) = E_a^\ctl(t).
\end{equation}

The \gls{ris} energy consumption depends on the power dissipated by its active elements and control operations, which can be modeled as
\begin{equation}
    E_r(t) =\tau\pay P_{r}(b) \sum_{n\in\mc{N}} \alpha_{n}(t) + E_r^\ctl(t),
\end{equation}
where $P_{r}(b)$ is the power dissipated by each of the $N$ $b$-bit resolution elements of the \gls{ris}, $\alpha_n(t)\in\{0,1\}$ is the element activation state (see eq.~\eqref{eq:phi}), and $E_r^\ctl(t)$ is the \gls{ris} energy consumption related to control.

In Section~\ref{sec:performance}, we will specify the energy consumed due to control operations for all the communication nodes.

\section{Offloading Protocol with Control Operations}\label{sec:protocol}
In this section, we present the protocol dictating how the computation offloading service occurs in a \gls{ris}-aided \gls{mec} system when considering its control aspects. Before the beginning of the offloading procedure, we assume that the \glspl{ue} connect to the network through an \emph{access phase}. During this phase, the \gls{ap} grants access to the \glspl{ue} by allocating the bandwidths $B_k$'s for the \glspl{dc} according to frequency multiplexing. Moreover, the \gls{ap} broadcasts essential protocol details, including the slot duration $\tau$ and other pertinent information, to all the \glspl{ue}. The proper design of this phase is beyond the scope of this paper (\eg, see~\cite{croisfelt2023access}).

\begin{figure}
    \centering
    \includegraphics[width=\columnwidth]{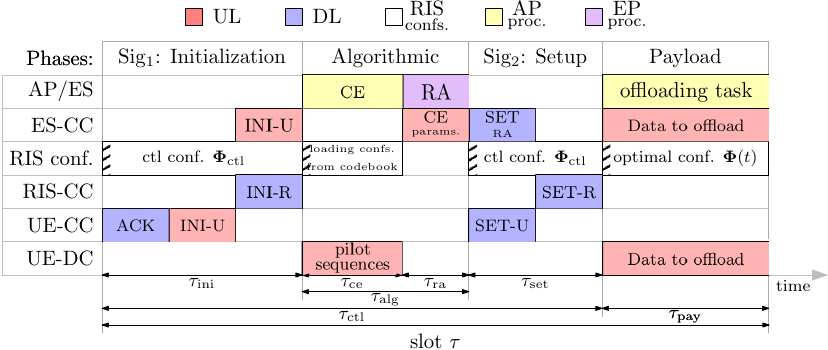}
    \caption{
    A timing diagram illustrating the offloading protocol execution at time slot $t$. The slot is divided into four phases: a) Signaling -- initialization, b) algorithmic, c) Signaling -- setup, and (d) payload. The first three phases occur within the control time $\tau_{\rm ctl}$ while the latter occurs over the payload time $\tau_{\rm pay}$. Control signaling is implemented through the introduction of four control packets: INI-U, with initial information sent by the \gls{ue} to the \gls{ap}; INI-R, with initial information sent by the \gls{ap} to the \gls{ris}; SET-U, with optimized \gls{ra} parameters sent by the \gls{ap} to the \gls{ue}; and SET-R, with an optimal configuration, $\bm{\phi}(t)$, sent by the \gls{ap} to the \gls{ris}. 
    }
    \label{fig:general}
\end{figure}
%


Fig.~\ref{fig:general} shows the timing diagram of the protocol within the $t$-th slot. In Section~\ref{sec:model}, we have defined that each slot of duration $\tau$ is divided into control $\tau_\ctl$ and payload $\tau\pay$ times (see eq.~\eqref{eq:slot}). Following the framework introduced in~\cite{saggese2023impact}, we further divide the control overhead into three different phases related to two different types of control operations: \emph{signaling} and \emph{algorithmic}. Thus, each slot consists of two signaling phases, \emph{initialization} and \emph{setup}, in which control information needed to initialize the network and set the optimized parameters of the offloading service are exchanged between nodes, respectively; an \emph{algorithmic} phase comprising the processing times of the \gls{ce} and \gls{ra} operations, which are needed to optimize the offloading service; and a \emph{payload} phase in which data transmission and offloading take place. The duration of these four phases is denoted as $\tau\ini$, $\tau\alg$, $\tau\set$, and $\tau\pay$, respectively. By definition, we have that $
    \tau_\ctl = \tau\ini + \tau\alg + \tau\set$,
and we can rewrite eq.~\eqref{eq:slot} as 
\begin{equation}
    \tau = \tau_\ctl + \tau\pay = \tau\ini + \tau\alg + \tau\set + \tau\pay.
    \label{eq:final-slot}
\end{equation}
Next, we detail each phase within the $t$-th slot. 

\subsection{Signaling: Initialization phase}
During this phase, the \gls{ris} loads the \gls{ctrl} configuration, $\bm{\phi}_\ctl$ (see Section \ref{sec:ris-model}). The \gls{ap} starts by carrying out an ACK procedure that informs the \glspl{ue} about the correct decoding of received data on the previous slot by broadcasting ACK/NACK packets, one packet per \gls{ue}. The \glspl{ue} prepare the data to offload accordingly, \ie, if the $k$-th \gls{ue} received a NACK, it re-inserts the data unsuccessfully transmitted as the oldest data in its queue $Q_k^u(t)$. Afterward, the \glspl{ue} inform the \gls{ap} on the state of their communication queues $Q_k^u(t)$, where each \gls{ue} sends a control packet called INI-U through the \gls{ue}-\gls{cc}. This information is then propagated to the \gls{es} through the \gls{es}-\gls{cc} because the \gls{es} needs it as input to perform the \gls{ra}. Meanwhile, the \gls{ap} informs the \gls{ris} to be prepared to start the \gls{ce} operation by sending a control packet called INI-R over the \gls{ris}-\gls{cc}.

\subsection{Algorithmic phase} 
The algorithm phase consists in the \gls{ce} and \gls{ra} procedures, having duration $\tau\ce$ and $\tau\ra$, respectively, with $\tau\alg = \tau\ce + \tau\ra$.

We start by describing the \gls{ce} procedure. The \gls{ce} operation takes place through the transmission of pilot sequences by the \glspl{ue} to the \gls{ap}. Then, the \gls{ap} uses the received signals to estimate the \gls{csi} comprised of direct channels $\mb{d}_k(t)$ and the equivalent reflected channels $\mb{G}_k(t)$, $\forall k\in\mc{K}$ by using one of the possible \gls{ce} methods for \gls{ris}-aided systems, \eg,~\cite{Swindlehurst2022ce, Wang2020ce, Mo2023ce}. In general, these methods estimate the channels using a two-step process: i) the \glspl{ue} transmit pilot sequences while the \gls{ris} is turned off, \ie, $\boldsymbol{\phi} = \mb{0}_N$; then, the \gls{ap} estimates the direct channel; ii) the \glspl{ue} transmit replicas of the pilot sequences while the \gls{ris} sweeps through a \emph{\gls{ce} codebook} $\mc{C}\ce$ containing \gls{ris} configurations $\boldsymbol{\phi}\in\mc{C}\ce$ -- a configuration is loaded per each pilot replica\footnote{Note that $\mc{C}\ce$ need to be shared among the \gls{ap} and the \gls{risc}, while the \glspl{ue} know its cardinality $|\mc{C}\ce| = C\ce$ to transmit an adequate number of pilot sequences; the information about $\mc{C}\ce$ and $C\ce$ can be exchanged during the access phase.}; then, the \gls{ap} can subtract the direct channel from the received signals and apply any classical estimation technique to estimate the reflected channels, taking care of removing the influences of the used configurations in $\mc{C}\ce$~\cite{Wang2020ce, Mo2023ce}. 
%
The overall estimated \gls{csi} for the $k$-th \gls{ue} is denoted as $\hat{\mb{h}}(t, \bm{\phi}) = \hat{\mb{d}}_k(t) + \hat{\mb{G}}_k(t) \bm{\phi}$, for any $\bm{\phi}$. The \gls{ap} shares the \gls{csi} knowledge obtained with the \gls{es}. 

We now describe the \gls{ra} procedure. The \gls{es} can perform a \gls{ra} procedure based on the estimated \gls{csi}, on the knowledge of the \glspl{ue} local communication queues given by the transmission of INI-U packets, and on its remote computation queues. We adopt the \gls{ra} procedure introduced in~\cite{DiLorenzo2022}. This procedure minimizes the total energy consumption while constraining the perceived average offloading latency under a threshold, $\bar{L}$. This is done by leveraging Lyapunov stochastic optimization in a greedy manner that optimizes the following parameters: 1) the \gls{ap} beamforming vector $\mb{w}(t)$, 2) the \glspl{ue} power coefficients denoted as $P_k(t)$, 3) the \gls{es} offloading computation resources $f_k(t)$, $\forall k\in\mc{K}$, and 4) the \gls{ris} configuration $\bm{\phi}(t)$ (see~\cite{DiLorenzo2022} for further details on the optimization). 

\subsection{Signaling: Setup phase}
During this phase, the \gls{ris} again loads the \gls{ctrl} configuration, $\bm{\phi}_\ctl$ (see Section~\ref{sec:ris-model}). This phase aims to send the parameters optimized by the \gls{ra}, carried out during the algorithmic phase by the \gls{es}, to the \glspl{ue} and the \gls{ris}. First, the \gls{es} sends the \gls{ra} parameters towards the \gls{ap} through the \gls{es}-\gls{cc}. Then, the \gls{ap} sends control packets called SET-U to each \gls{ue} containing its power coefficient $P_k(t)$ and its nominal throughput $R_k(t)$ over the \gls{ue}-\gls{cc}. 
Similarly, the \gls{ap} sends a control packet called SET-R to the \gls{ris} containing the configuration $\bm{\phi}(t)$ over the \gls{ris}-\gls{cc}.

\subsection{Payload phase} \label{sec:payload}
In this phase, every \gls{ue} offloads its data by transmitting with its power coefficient $P_k(t)$ and its nominal throughput $R_k(t)$ while the \gls{ris} loads the configuration $\bm{\phi}(t)$ and the \gls{ap} employs the beamforming vector $\mb{w}(t)$. Accordingly, it is expected that the $k$-th \gls{ue} transmits with power $P_k(t)$ and \emph{nominal throughput} $R_k(t)$ given by
\begin{equation} \label{eq:throughput}
    R_k(t) = B_k \log\left(1 + \frac{P_k(t)}{N_0 B_k} |\mb{w}(t)\H \hat{\mb{h}}_k(t, \bm{\phi}(t))|^2 \right),
\end{equation}
where $N_0$ [W/Hz] is the noise spectral density. 
In the ideal case of perfect \gls{csi} with error-free \glspl{cc}, the \gls{ue} can reliably transmit at its channel capacity, given by\footnote{We are assuming to work on the Shannon limit for simplicity of presentation. A real system should account for the loss provided by the \gls{mcs} used.}
\begin{equation} \label{eq:capacity}
    C_k(t) = B_k \log_2 \left(1 + \frac{P_k(t)}{N_0 B_k} |\mb{w}(t)\H \mb{h}_k(t, \boldsymbol{\phi}(t))|^2 \right).
\end{equation}
In this case, the actual throughput in~\eqref{eq:dynamic:local-queue-update} and the nominal throughput in~\eqref{eq:throughput} are the same as $\tilde{R}_k(t) = R_k(t) = C_k(t)$. 
However, we are interested in studying cases in which the control operations are not error-free. Specifically, errors in the \gls{ce} or the control signaling may lead to a nominal throughput higher than the channel capacity, resulting in the loss of the payload data. Hence, the actual throughput used to empty the \gls{ue} queues in~\eqref{eq:dynamic:local-queue-update} and~\eqref{eq:dynamic:remote-queue-update} evaluates to
 \begin{equation}
     \tilde{R}_k(t) = 
     \begin{cases}
         0,  &\mathrm{if}\,\, R_k(t) > C_k(t), \\
         R_k(t), &\mathrm{if}\,\, R_k(t) \le C_k(t).
     \end{cases}
 \end{equation}
We analyze the impact of the control operations in the next section.

\section{Impact of Control Operations}
\label{sec:performance}
In this section, we analyze the impact of control operations on the perceived offloading delay and the total energy consumption of the system. We first evaluate the overhead and the energy consumption introduced by the control operations; then, we describe how losing the control packets affects the decisions of the nodes.


%
\subsection{Overhead evaluation}
The control overhead $\tau_{\rm ctl}$ accounts for the time reserved for all control operations, which depends on the number of control packets sent during the initialization and setup phases and the duration of the \gls{ce} and \gls{ra} procedures during the algorithmic phase. Next, we characterize the control overhead of eq.~\eqref{eq:final-slot}.

\paragraph{Signaling: Initialization phase} 

The control overhead due to the initialization phase consists of i) the simultaneous transmission of $K$ one-bit messages for the \gls{ack} procedure, ii) the simultaneous transmission of $K$ INI-U control packets, and iii) the transmission of a single INI-R control packet. Moreover, we consider a guard period of $\tau_s$ seconds to load the \gls{ctrl} configuration. 
Thus, the initialization overhead is
\begin{equation}
    \tau\ini = \tau_s + 3 T.
\end{equation}

\paragraph{Algorithmic phase}
Due to \gls{fdm}, the \gls{ce} of each \gls{ue} is performed simultaneously.
In each bandwidth $B_k$, the $k$-th \gls{ue} transmits a pilot sequence composed of $N_p$ \glspl{tti} for the direct path estimation, and the same sequence is repeated for each \gls{ris} configuration in the \gls{ce} codebook $\mathcal{C}\ce$ for the reflected path estimation. The \gls{ce} overhead results in
\begin{equation}
    \tau\ce = (\tau_s + N_p T) (C\ce + 1),
\end{equation}
where $\tau_s$ is the switching time between each configuration and $C\ce$ is the number of \gls{ce} configurations.

The \gls{ra} overhead $\tau\ra$ is determined by the number of computational cycles required in \gls{ra} task $n\ra$ and by the \gls{cpu} cycles $f\ra$ [cycle/s] employed for computation, obtaining: 
\begin{equation}
  \tau\ra = \frac{n\ra}{f\ra},
\end{equation}
where $f\ra \le f_{\max}$. To evaluate $n\ra$, we only consider the cost of jointly optimizing the \gls{ris} configuration and \gls{ap} beamforming vector and ignore the power coefficient computation, the latter needed regardless of the presence of the \gls{ris}. Following
~\cite[Algorithm 1]{DiLorenzo2022}, the optimal $\bm{\phi}(t)$ and $\mb{w}(t)$ are obtained employing a greedy approach: keeping fixed an \gls{ap} beamforming vector $\mb{w}(t) \in \mc{C}_\mathrm{ap}$, we select the phase-shift that minimizes the objective function in \cite[eq. (26)]{DiLorenzo2022} for a group of $N_g < N$ elements, keeping fixed the phase-shifts of the other elements $N - N_g$. Considering that each element can be inactive or active with $2^b$ different phase-shift values, the number of possible choices is $2^b + 1$. Note that the $N_g$ elements grouped will load the same phase-shift.
This procedure is repeated $\forall \mb{w}(t) \in\mc{C}_\mathrm{ap}$ to find the pair $(\mb{w}(t), \bm{\phi}(t))$ that
greedily minimizes the objective function.
Accordingly, the number of computational cycles yields in
\begin{equation}
    n\ra = C_\mathrm{ap} (2^b +1) \frac{N}{N_{g}} \mu,
\end{equation}
where $\mu = 2 [ K (3 N / N_g + M +4) + 3]$ is the number of multiplications needed to evaluate the objective function for every pair $(\mb{w}(t), \bm{\phi}(t))$~\cite[Algorithm 1]{DiLorenzo2022}\footnote{We used the multiplication as the dominant operations for simplicity.}. Clearly, the performance boost provided by the \gls{ris} can be maximized by applying this procedure for each element, \ie, by setting $N_g = 1$, at the cost of increasing $\tau\ra$. 

\paragraph{Signaling -- setup phase}
The control overhead due to the setup phase consists of i) the simultaneous transmission of $K$ SET-U control packets and ii) the transmission of a single SET-R control packet. Again, we consider the guard period of $\tau_s$ seconds to load the \gls{ctrl} configuration. Thus, the setup overhead is:
\begin{equation}
    \tau\set = 2 \tau_s + 2T.
\end{equation}

\subsection{Energy consumption due to control operations}
We now characterize $E_e^\ctl(t)$, $E_k^\ctl(t)$, $E_a^\ctl(t)$ and $E_r^\ctl(t)$ introduced in Section~\ref{sec:energy-model}. For the \gls{es}, the energy spent during the control portion of the slot is the energy used to perform the \gls{ra}, modeled in the same way as the offloading task, obtaining
\begin{equation}
    E_e^\ctl(t) = \gamma_{c} f\ra^3 \tau\ra.
\end{equation}
During control signaling, each $k$-th \gls{ue} sends only a single INI-U packet; for \gls{ce}, it sends $C\ce + 1$ pilot sequences of length $N_p$ \glspl{tti}. Hence, the energy consumption during control is
\begin{equation}
    E_k^\ctl(t) = P_k^\ctl(t) T (1 + N_p (C\ce + 1)),
\end{equation}
where $P_k^\ctl(t)$ is the power used for control signaling. The \gls{ap} consumes power to send control packets. It sends the ACK and SET-U packets to all the $K$ \glspl{ue}, while the INI-R and SET-R packets to a single receiver; hence, it consumes
\begin{equation}
    E_a^\ctl(t) = 2 P_a^\ctl(t) T  (K + 1),
\end{equation}
where $P_a^\ctl(t)$ is the \gls{ap} power used for control signaling. Finally, we assume that all the \gls{ris} elements are active when loading $\bm{\phi}_\ctl$ and during the \gls{ce} of the reflected path, yielding 
\begin{equation}
    E_r^\ctl(t) = (\tau\ini +\tau\set + C\ce  N_p T) N P_{r}(b).
\end{equation}

\subsection{The effect of control errors}
The performance of the overall network is tied with the actual throughput $\tilde{R}_k(t)$, dependent on the nominal throughput $R_k(t)$ and the possible errors in the \gls{ce} and in the control signaling that can occur when decoding the control packets. 

\paragraph{Errors in \gls{ce}}
The following proposition characterizes the estimation error during \gls{ce}.
\begin{proposition}
    Assuming \gls{ls} estimation applied on pilot replicas of $N_p$ symbols length and uncorrelated Rayleigh channels, the \gls{csi} at the \gls{ap} after the \gls{ce} is
    \begin{equation}
    \begin{aligned}
        \hat{\mb{d}}_k(t) &= \mb{d}_k(t) + \mb{n}_{k}(t)\sim \mc{CN}(\mb{0}_M, \lambda_k \mb{I}_M), \\
        \hat{\mb{G}}_k(t) &= \mb{G}_k + \mb{N}_k(t)\sim \mc{CN}(\mb{0}_{MN}, \gamma_k \mb{I}_{MN}),
    \end{aligned}    
    \end{equation}
    having variances
    \begin{equation}
        \begin{aligned}
            \lambda_k = \frac{N_0 B_k}{N_p P_k^\ctl(t)}, \quad \gamma_k = \frac{2 N_0 B_k}{N_p P_k^\ctl(t)} \frac{N}{C\ce^2}.
        \end{aligned}
    \end{equation}
\end{proposition}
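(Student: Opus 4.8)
The plan is to run the two-step \gls{ls} estimator of the algorithmic phase on the block-fading pilot observations and to push the \gls{awgn} through each linear operation, so that every estimation error becomes an affine image of a Gaussian vector and is therefore Gaussian; the variances then follow by bookkeeping the pilot length, the transmit power, and the codebook structure.

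First I would write down the received pilot at the $M$-antenna \gls{ap}. Over a pilot of $N_p$ symbols the \gls{ap} collects $\mb{d}_k(t)$ (step~(i), \gls{ris} off) or $\mb{h}_k(t,\bm{\phi}) = \mb{d}_k(t) + \mb{G}_k(t)\bm{\phi}$ (step~(ii), \gls{ris} loading $\bm{\phi}\in\mc{C}\ce$), each scaled by the pilot amplitude $\sqrt{P_k^\ctl(t)}$ and corrupted by spatially white \gls{awgn} of per-symbol, per-antenna variance $N_0 B_k$. For the direct channel, the \gls{ls} estimate descales the pilot and averages the $N_p$ independent noise samples; the residual $\mb{n}_k(t)$ is then a linear function of the noise, hence $\mc{CN}$, with covariance $\tfrac{N_0 B_k}{N_p P_k^\ctl(t)}\mb{I}_M = \lambda_k\mb{I}_M$. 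This step is routine.

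The reflected channel is the crux. After forming the per-configuration \gls{ls} estimates and subtracting the already-computed $\hat{\mb{d}}_k(t)$, each residual reads $\mb{G}_k(t)\bm{\phi} + \mb{e}$, where the effective noise $\mb{e}$ combines the fresh per-configuration measurement noise with the independent error $\mb{n}_k(t)$ of the subtracted direct channel; this is the origin of the factor of two, i.e.\ an effective per-entry variance $2\lambda_k$. Stacking the $C\ce$ residuals into $\tilde{\mb{Z}} = \mb{G}_k(t)\bm{\Phi} + \mb{E}$, with $\bm{\Phi}\in\mathbb{C}^{N\times C\ce}$ collecting the codebook configurations column-wise, the \gls{ls} estimate is $\hat{\mb{G}}_k(t) = \tilde{\mb{Z}}\bm{\Phi}\H(\bm{\Phi}\bm{\Phi}\H)^{-1}$, so the error $\mb{N}_k(t) = \mb{E}\bm{\Phi}\H(\bm{\Phi}\bm{\Phi}\H)^{-1}$ is again Gaussian by linearity, with per-row covariance $2\lambda_k(\bm{\Phi}\bm{\Phi}\H)^{-1}$. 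Invoking the unit-modulus, mutually orthogonal (\gls{dft}-type) design of $\mc{C}\ce$, for which $\bm{\Phi}\bm{\Phi}\H$ is a scaled identity, this collapses to the claimed $\gamma_k\mb{I}_{MN}$ with $\gamma_k = 2\lambda_k\,N/C\ce^2$.

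I expect the reflected-path covariance to be the main obstacle, for two reasons. First, pinning the exact constant and the scaled-identity (spatially white, $\mb{I}_{MN}$) form — rather than a generic covariance — requires the precise codebook normalization, namely $\bm{\Phi}\bm{\Phi}\H = (C\ce^2/N)\,\mb{I}_N$, which is what turns $2\lambda_k(\bm{\Phi}\bm{\Phi}\H)^{-1}$ into $\gamma_k\mb{I}$; I would state this condition explicitly and use the uncorrelated-Rayleigh assumption to justify plain \gls{ls} (no channel correlation to exploit) and to keep the error white across the $M$ antennas and $N$ elements. Second, the subtracted direct-channel error $\mb{n}_k(t)$ is common to all $C\ce$ residual columns, so the stacked noise is not strictly white across configurations; I would argue that under the balanced codebook this shared component is averaged by the \gls{ls} combiner and can be absorbed into the nominal factor-of-two inflation, which is precisely the modeling simplification that makes the final covariance exactly diagonal.
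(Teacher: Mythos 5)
Your proposal follows the same route as the paper's own proof, which is only a two-sentence sketch (correlate with the pilot to get $\hat{\mb{d}}_k$, then correlate, subtract $\hat{\mb{d}}_k$, and post-multiply by the configurations, deferring details to the cited CE reference): your explicit identification of the factor of two as the re-injected direct-channel estimation error, and of the codebook Gram matrix $\bm{\Phi}\bm{\Phi}\H$ as the source of the $N/C\ce^2$ scaling, fills in exactly the steps the paper leaves implicit, and your caveat about the shared $\mb{n}_k(t)$ term making the stacked noise not strictly white is a legitimate refinement the paper does not address. The only point to watch is that for unit-modulus configurations a scaled-identity Gram matrix necessarily satisfies $\bm{\Phi}\bm{\Phi}\H = C\ce\,\mb{I}_N$ (by the trace), so your stated normalization $\bm{\Phi}\bm{\Phi}\H = (C\ce^2/N)\,\mb{I}_N$ — and hence the proposition's exact constant — coincides with it only when $C\ce = N$; this is an ambiguity inherited from the statement itself rather than a flaw in your argument.
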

\begin{proof}
    The expression of the $\hat{\mb{d}}_k(t)$ is derived by correlating the received signal with the transmitted pilot when $\bm{\phi}(t) = \bm{0}_N$; then $\hat{\mb{G}}_k(t)$ is obtained by correlating the received signal with the pilot, subtracting $\hat{\mb{d}}_k$, and post-multiply it by the configurations in $\mc{C}\ce$~\cite{Wang2020ce}.
\end{proof}

\paragraph{Errors in control signaling}
If correct control occurs, \ie, no control packet is lost or erroneously decoded, the transmission power $P_k(t)$ and the nominal throughput $R_k(t)$ in~\eqref{eq:throughput} are correctly set by the \gls{ue}. In the following, we detail the effect on the decision made by the nodes caused by losing each of the control packets\footnote{The ACK packet is not analyzed because it is needed regardless of the presence of a \gls{ris} in the system. Hence, the ACK packet is always reliable, an assumption justified by its limited informative content.} -- INI-U, INI-R, SET-U, SET-R.

If the $k$-th \gls{ue}'s INI-U packet is lost, the information on the \gls{ue} local queue is not provided to the \gls{es}; then, the latter can infer from the received data that the \gls{ue} communication queue was emptied by $\tau\pay \tilde{R}_k(t)$ bits. Since the \gls{es} is still unaware of the amount of new data $A_k(t)$ arrived at the \gls{ue}, its queue is updated following 
\begin{equation}
    \hat{Q}^u_k(t+1) = \max[0, Q_k^u(t) - \tau\pay \tilde{R}_k(t)].
\end{equation}
The \gls{ra} process will use this input to optimize the power coefficient and the nominal throughput, impacting the offloading latency.

If the INI-R packet is not received, the \gls{ris} is unaware of the need for switching configurations for the \gls{ce} process. Then, the whole \gls{ce} procedure occurs while the \gls{ris} loads the control configuration $\bm{\phi}_\mathrm{ctrl}$. For the sake of simplicity, we assume that no \gls{csi} can be obtained in this case, and the nominal throughput of all \glspl{ue} result in $R_k(t)  = 0$, $\forall k\in\mc{K}$.

If the $k$-th \gls{ue}'s SET-U packet is lost, then the $k$-th \gls{ue} is not informed by the power coefficient $P_k(t)$ and the nominal throughput $R_k(t)$ in~\eqref{eq:throughput} for the payload transmission. Hence, the best the \gls{ue} can do is to transmit with the power and the throughput of the previous slot, hoping the data stream can be decoded at the \gls{ap}. We have
\begin{equation}
    P_k(t) = P_k(t-1) \text{ and } R_k(t) = R_k(t-1).
\end{equation}

If the SET-R packet is not received, the \gls{ris} is not informed of $\mb{\phi}(t)$ and can only resort to loading the optimal configuration of the previous slot. This affects the channel capacity of the system, being now, $\forall k\in\mc{K}$,
\begin{equation}
    C_k(t) = B_k \log_2 \Big(1 + \frac{P_k(t)}{N_0 B_k} |\mb{w}(t)\H \mb{h}_k(t, \boldsymbol{\phi}(t-1))|^2 \Big).
\end{equation}

Note that the reliability of the control packets depends on their informative content, the time reserved for the transmission, and the bandwidth of the \glspl{cc}. In this paper, the impact of reliability will be analyzed numerically.

\section{Numerical Results}

\begin{table}[bt]
    \centering
    \caption{Simulation parameters.}
    \footnotesize
    \begin{tabular}{lcc}
    \toprule    
    Scenario radius & $r$ & $100$~m  \\
    \gls{ap} position & $\mb{x}_a$ & $50 \sqrt{2} [1, 1, 0]\T$~m \\    
    \gls{ris} elements & $N$ & 64  \\
    \gls{ap} antennas & $M$ & 8 \\
    No. of \glspl{ue} & $K$ & 4 \\    
    No. of slots    & $N_t$ & 100 \\
    Overall bandwidth & $B$ & $500$~MHz\\
    \gls{ue}/\gls{ap} \gls{ctrl} power & $P_k^\ctl(t)$, $P_a^\ctl(t)$ & 20, 24  dBm \\
    \gls{ue}/\gls{ap} noise density & $N_0$ & $-170$~dBm/Hz\\
    Gain at reference distance & $\sigma_0^2$ & -38 dB m$^2$ \\
    Max. and \gls{ra} \gls{cpu} freq. & $f_{\max}$, $f\ra$ & 4.5. 0.5 GHz \\    
    Avg. arrival rate & $\bar{A}_k$ & [50, 100, 200] kbps \\
    Avg. latency constraint    & $\bar{L}$ & 300 ms \\
    Energy weighting parameter & $\sigma$ & 0.5 \\
    Codebooks cardinality & $C_\mathrm{ap}$, $C\ce$  & 25, $N$ \\
    Slot, \gls{tti}, \gls{ris} guard time & $\tau$, $T$, $\tau_s$ & $60:300$, 1/14, 0 ms \\    
    Pilot sequence length & $N_p$ & 1 \\
    Group of \gls{ris} elements & $N_g$ & $2$ \\ 
    \bottomrule
    \end{tabular}
    \label{tab:params}
    \vspace{-0.3cm}
\end{table}

In this section, we numerically evaluate the impact of control operations in the \gls{qos} of the computation offloading service in \gls{ris}-aided \gls{mec} systems. The presented results are obtained simulating a scenario similar to Fig.~\ref{fig:scenario}: the \gls{ris} is positioned at the center of a semi-circle of radius $r$, while the \gls{ap}'s position is $\mb{x}_a\in\mathbb{R}^{3}$; for every setup under test, the $K$ \glspl{ue} have random positions $\mb{x}_k\in\mathbb{R}^{3}$ within the semi-circle, and $N_t$ different slots are simulated. The presented results are averaged for the different setups if not specified differently. The channels are generated following Rayleigh fading, \ie, $\mb{d}_k(t) \sim \mc{CN}(\mb{0}_M, \sigma_0^2 \lVert \mb{x}_a - \mb{x}_k\rVert^{-3} \mb{I}_M)$, and $[\mb{G}_k(t)]_m = (\mb{r}_k(t) \circ  \mb{g}_m(t))\T$, $\forall m\in\mc{M}$, with $\mb{g}_m(t) \sim \mc{CN}(\mb{0}_{N}, \sigma_0^2\lVert \mb{x}_a\rVert^{-2} \mb{I}_{N})$ being the \gls{ris}-\gls{ap} channel, and $\mb{r}_k(t) \sim \mc{CN}(\mb{0}_N, \sigma_0^2\lVert \mb{x}_k\rVert^{-2} \mb{I}_N)$ the \gls{ue}-\gls{ris} one. The other relevant parameters are listed in Table~\ref{tab:params}~\cite{DiLorenzo2022, saggese2023impact}\footnote{See \url{https://github.com/victorcroisfelt/mec-with-ris-control} for the code.}. With the parameters under test, the overall control overhead results in $\tau_\ctl = 50.2$ ms, composed by $\tau\ini + \tau\set = 3.6$ ms, $\tau\ce = 46.4$ ms, and $\tau\ra \approx 0.17$ ms with $\tau\alg=\tau\ce+\tau\ra$. This demonstrates that the \gls{ce} impact on the overhead is dominant, constraining the whole system to work with time granularity that can hardly be met in practical scenarios. Hence, further research targeting the reduction of the \gls{ce} burden is needed to integrate the \gls{ris} in real systems.

\begin{figure}[hbt]
    \centering
    \input{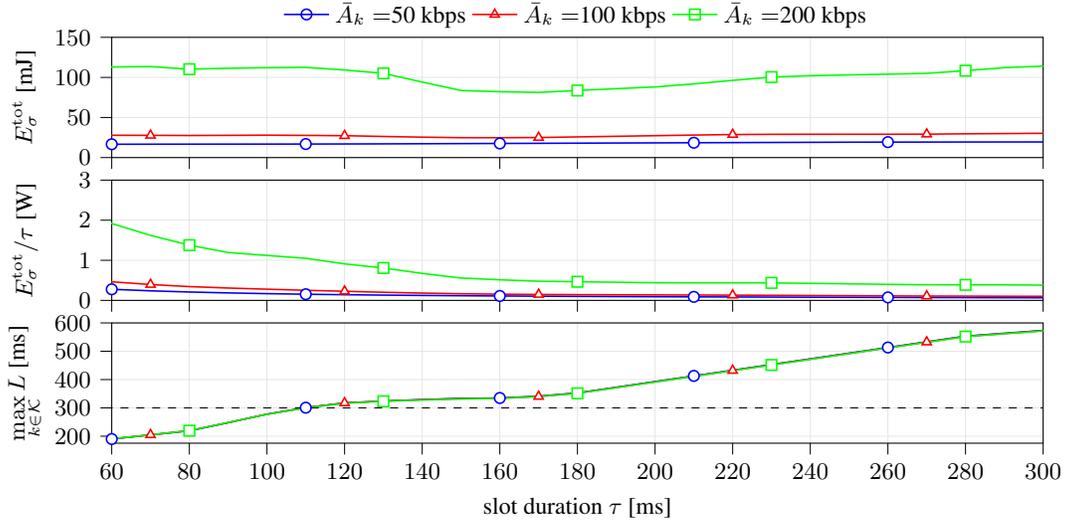}
    \caption{System performance vs. slot duration (error-free).}
    \label{fig:time-slot}
\end{figure}

Fig.~\ref{fig:time-slot} shows the average energy consumption, the average power spent per slot, and the maximum offloading latency as a function of the slot duration $\tau$, considering an error-free environment, \ie, with perfect \gls{csi} knowledge, and perfect control signaling. The parameters simulate a scenario where the coherence time is 300 ms, and the slot duration can be set to strike an energy-latency trade-off. Note that the control overhead is fixed, resulting in an increased payload time $\tau\pay$ when $\tau$ increases, following~\eqref{eq:final-slot}. 
While the selection of the slot duration slightly influences the average energy consumption, the power spent per slot reduces with increasing $\tau$ due to the possibility of decreasing the nominal throughput and the \gls{es} \gls{cpu} frequency while transmitting and processing the same amount of bits. On the other hand, increasing $\tau$ directly increases the offloading latency experienced, risking violating the latency constraint. Moreover, an eventual increase of $\bar{A}_k$ is absorbed by higher energy consumption without impacting the latency performance. Hence, the optimal working point is given by the highest slot time able to support the latency constraint, which is around $\tau = 100$ ms for our scenario.

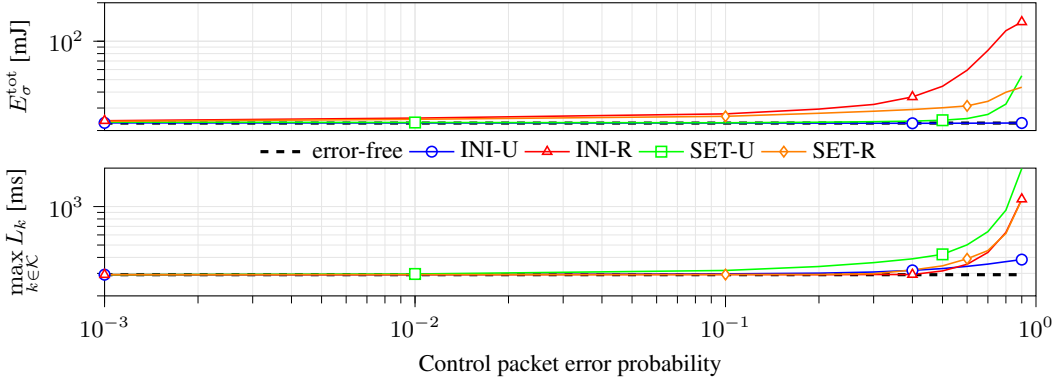
\begin{figure}[htb]
    \centering
%
%

\def\shift{-0.7cm}
\def\sep{0.5cm}
\def\vside{1.7cm}

\begin{tikzpicture}
\begin{groupplot}[
group style={group name=error proba, group size=1 by 2,  horizontal sep=\sep, vertical sep=\sep, x descriptions at=edge bottom, ylabels at=edge left}, 
height=\vside,
width=0.75\columnwidth,
scale only axis,
xmode=log,
xmin=0.001,
xmax=1,
xminorticks=true,
xmajorgrids,
xminorgrids,
ymajorgrids,
ymode=log,
yminorticks=true,
yminorgrids,
xlabel={Control packet error probability},
ytick={1e1,1e2,1e3}
]
\nextgroupplot[
legend style={  
  at={(0.5, 0)},
  anchor=north,  
  draw=none,
  fill opacity=0,
},
legend columns=5,
ymin=20,
ymax=200,
ylabel={$E_\sigma^\mathrm{tot}$ [mJ]},
xminorticks=false,
xmajorticks=false
]
\addplot [very thick, black, dashed]
  table[row sep=crcr]{%
0.001	22.873544170934\\
0.9	22.873544170934\\
};
\addlegendentry{error-free}

\addplot [semithick, blue,  mark=*, mark options={fill=white, solid}, mark repeat=5, mark phase=0]
  table[row sep=crcr]{%
0.001	22.9548583712455\\
0.01	22.9202327221325\\
0.1	22.8915645482797\\
0.2	22.859790747286\\
0.3	22.8099914459255\\
0.4	22.7872040101804\\
0.5	22.7935133211807\\
0.6	22.8220833873508\\
0.7	22.8476343805325\\
0.8	22.8803840179613\\
0.9	22.8969129337076\\
};
\addlegendentry{INI-U}

\addplot [semithick, red,  mark=triangle*, mark options={fill=white, solid}, mark repeat=5, mark phase=1]
  table[row sep=crcr]{%
0.001	23.8655030743458\\
0.01	25.0234679306422\\
0.1	27.0005878521576\\
0.2	29.4382357183279\\
0.3	32.0013889437218\\
0.4	36.6459653956521\\
0.5	44.3972184180906\\
0.6	59.2906441002966\\
0.7	84.7518534962227\\
0.8	120.900339925995\\
0.9	141.495942705976\\
};
\addlegendentry{INI-R}

\addplot [semithick, green,  mark=square*, mark options={fill=white, solid}, mark repeat=5, mark phase=2]
  table[row sep=crcr]{%
0.001	23.2785372129381\\
0.01	23.1100119747596\\
0.1	23.0037463853459\\
0.2	23.2214821543221\\
0.3	23.4641330734204\\
0.4	23.7004527522772\\
0.5	24.076810108578\\
0.6	24.8160493175984\\
0.7	26.7340066704191\\
0.8	32.1949370113494\\
0.9	53.6961610986849\\
};
\addlegendentry{SET-U}

\addplot [semithick, orange,  mark=diamond*, mark options={fill=white, solid}, mark repeat=5, mark phase=3]
  table[row sep=crcr]{%
0.001	23.6001640048013\\
0.01	24.4784901545798\\
0.1	25.82926060073\\
0.2	27.2952765363434\\
0.3	28.3187929368941\\
0.4	29.200270249749\\
0.5	30.1480924694199\\
0.6	31.1714442999339\\
0.7	33.8500219926425\\
0.8	39.8816047524371\\
0.9	43.63453355499\\
};
\addlegendentry{SET-R}

\nextgroupplot[
ymin=200,
ymax=2000,
ylabel={$\displaystyle \max_{k\in\mc{K}} L_k$ [ms]},
yminorgrids,
yminorticks=true,
ymode=log,
ylabel shift={-1.5mm}
]
\addplot [very thick, black, dashed]
  table[row sep=crcr]{%
0.001	293\\
0.9 	293\\
};

\addplot [semithick, blue,  mark=*, mark options={fill=white, solid}, mark repeat=5, mark phase=0]
  table[row sep=crcr]{%
0.001	293\\
0.01	293.111873239438\\
0.1	297.423393651024\\
0.2	301.198144347043\\
0.3	307.465178244507\\
0.4	316.711280635493\\
0.5	327.223371758873\\
0.6	340.51745904676\\
0.7	354.626989701408\\
0.8	370.739625374647\\
0.9	384.667703170704\\
};

\addplot [semithick, red,  mark=triangle*, mark options={fill=white, solid}, mark repeat=5, mark phase=1]
  table[row sep=crcr]{%
0.001	293\\
0.01	293.4812509007\\
0.1	293.872632887888\\
0.2	293.012201641827\\
0.3	293.961104094649\\
0.4	293.750447310105\\
0.5	313.678050681691\\
0.6	353.285619042255\\
0.7	437.498167936647\\
0.8	627.877974828991\\
0.9	1143.12285820146\\
};

\addplot [semithick, green,  mark=square*, mark options={fill=white, solid}, mark repeat=5, mark phase=2]
  table[row sep=crcr]{%
0.001	293\\
0.01	296.788983098593\\
0.1	316.195837566198\\
0.2	339.549470350424\\
0.3	364.15810330141\\
0.4	389.901530523945\\
0.5	422.702273261973\\
0.6	501.584611145916\\
0.7	636.242666224344\\
0.8	936.586216901409\\
0.9	1984.11849329578\\
};

\addplot [semithick, orange,  mark=diamond*, mark options={fill=white, solid}, mark repeat=5, mark phase=3]
  table[row sep=crcr]{%
0.001	293\\
0.01	293.194315560565\\
0.1	293.755188905181\\
0.2	294.178834227742\\
0.3	298.202549813574\\
0.4	318.845942938067\\
0.5	343.55417938932\\
0.6	390.241782634318\\
0.7	452.417597815363\\
0.8	616.075376050139\\
0.9	1131.82514804774\\
};

\end{groupplot}

\end{tikzpicture}%
    \caption{System performance vs. control packet error probability ($\tau = 100$ ms, $\bar{A}_k = 50$ kbps).}
    \label{fig:error-proba}
\end{figure}

Fig.~\ref{fig:error-proba} shows the system performance as a function of the control packet error probability, assuming that each curve simulates that only one kind of control packet can be lost. Losing INI-U packets influences the performance the least, showing that the system can generally recover the arrival rate information in subsequent slots. Losing INI-R packets strongly affects energy consumption while moderately influencing latency because the system tends to spend more energy to offload more data after an unsuccessful \gls{ce}. Losing SET-U packets strongly affects the latency performance due to the use of the inaccurate value of $P_k(t)$ and $R_k(t)$, while an inaccurate $\bm{\phi}(t)$ is less harmful (see SET-R).


\begin{figure}[htb]
    \centering
%
%
\definecolor{mycolor1}{rgb}{0.00000,0.44700,0.74100}%
\definecolor{mycolor2}{rgb}{0.85000,0.32500,0.09800}%
\definecolor{mycolor3}{rgb}{0.92900,0.69400,0.12500}%
\definecolor{mycolor4}{rgb}{0.49400,0.18400,0.55600}%
\def\shift{-0.7cm}
\def\sep{0.52cm}
\def\vside{1.5cm}

\begin{tikzpicture}
\begin{groupplot}[
group style={group name=slot-time, group size=1 by 2,  horizontal sep=\sep, vertical sep=\sep, x descriptions at=edge bottom, ylabels at=edge left}, 
height=\vside,
width=0.8\columnwidth,
scale only axis,
xmin=64,
xmax=128,
xlabel={$C_\mathrm{ce}$},
xmajorgrids,
ymajorgrids,
xmajorgrids,
ymajorgrids,
legend style={at={(0.5, 0)}, anchor=north, draw=none, fill opacity=0, font=\scriptsize},
legend columns = 5,
]
\nextgroupplot[
ymin=0,
ymax=0.3,
ylabel={$E_\sigma^\mathrm{tot}$ [J]},
xmajorticks=false,
]
\addplot [semithick, blue,  mark=*, mark options={fill=white, solid}]
  table[row sep=crcr]{%
64	0.0335120028684774\\
80	0.0344164765948507\\
96	0.0421439878385395\\
112	0.104856541711509\\
128	0.0462389347196486\\
};
\addlegendentry{$\mu = $ 0.1}

\addplot [semithick, red,  mark=triangle*, mark options={fill=white, solid}]
  table[row sep=crcr]{%
64	0.0459822548032234\\
80	0.0461102255362508\\
96	0.0517047894525786\\
112	0.116696049617972\\
128	0.0796872529146348\\
};
\addlegendentry{$\mu = $ 0.3}

\addplot [semithick, green,  mark=square*, mark options={fill=white, solid}]
  table[row sep=crcr]{%
64	0.0699300651512684\\
80	0.0715574956453651\\
96	0.0780706535406401\\
112	0.141163607502573\\
128	0.149577716059976\\
};
\addlegendentry{$\mu = $ 0.5}

\addplot [semithick, orange,  mark=diamond*, mark options={fill=white, solid}]
  table[row sep=crcr]{%
64	0.163702359413277\\
80	0.165761540388707\\
96	0.17991190030339\\
112	0.24520340898755\\
128	0.288990906143447\\
};
\addlegendentry{$\mu = $ 0.7}

\addplot [semithick, violet,  mark=halfsquare*, mark options={fill=white, solid}]
  table[row sep=crcr]{%
64	0.216631046769824\\
80	0.216593358620242\\
96	0.216586694154289\\
112	0.216587422777956\\
128	0.216596457356067\\
};
\addlegendentry{$\mu = $ 0.9}

\nextgroupplot[
ymode=log,
ymin=0.2,
ymax=10,
yminorticks=true,
ylabel={$\displaystyle\max_{k\in\mathcal{K}} L_k$ [ms]},
yminorgrids,
ylabel shift = {-2mm},
ytick={1e-1, 1e0, 1e1, 1e2},
yticklabels={$10^{2}$, $10^3$, $10^4$}
]
\addplot [semithick, blue,  mark=*, mark options={fill=white, solid}, forget plot]
  table[row sep=crcr]{%
64	1.12609285730741\\
80	0.493328507013741\\
96	0.643122317222585\\
112	3.26458601689545\\
128	7.23157390417647\\
};
\addplot [semithick, red,  mark=triangle*, mark options={fill=white, solid}, forget plot]
  table[row sep=crcr]{%
64	1.12145018498343\\
80	0.473525915759457\\
96	0.46237954306067\\
112	1.5236186368344\\
128	6.6039258291233\\
};
\addplot [semithick, green,  mark=square*, mark options={fill=white, solid}, forget plot]
  table[row sep=crcr]{%
64	1.11784262454337\\
80	0.469323999004947\\
96	0.405521654413671\\
112	0.886315622082248\\
128	5.05523959325016\\
};
\addplot [semithick, orange,  mark=diamond*, mark options={fill=white, solid}, forget plot]
  table[row sep=crcr]{%
64	1.11738680510845\\
80	0.462093723107117\\
96	0.379704571168102\\
112	0.653883530749783\\
128	3.14250416640866\\
};
\addplot [semithick, violet,  mark=halfsquare*, mark options={fill=white, solid}, forget plot]
  table[row sep=crcr]{%
64	4.30482412007327\\
80	4.22353452705462\\
96	4.20858801865869\\
112	4.2303424940031\\
128	4.46613563195919\\
};
\addplot [color=black, dashed, forget plot]
  table[row sep=crcr]{%
64	0.3\\
128	0.3\\
};
\end{groupplot}
\end{tikzpicture}%
    \caption{System performance vs. the number of \gls{ce} configurations ($\tau=100$ ms, $\bar{A}_k = 100$ kbps).}
    \label{fig:ce}
\end{figure}
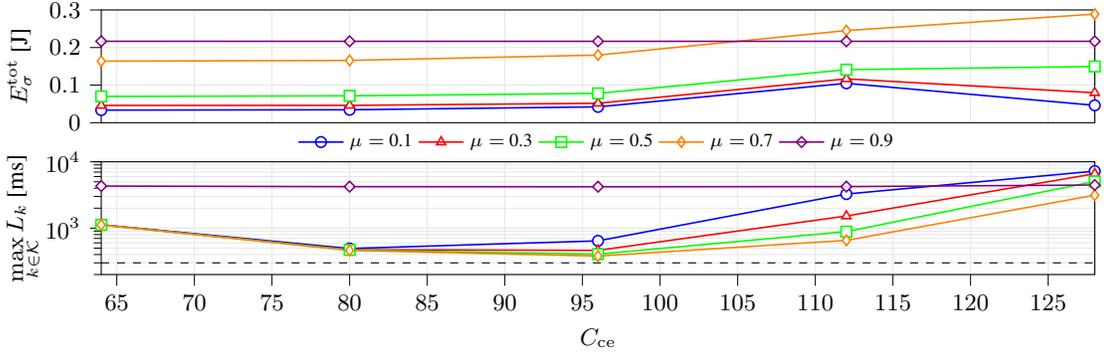

Finally, Fig.~\ref{fig:ce} shows the performance when accounting for the \gls{ce} errors as a function of the number of \gls{ce} configurations, $C\ce$. 
In this case, to reduce the probability that $R_k(t) > C_k(t)$, the nominal throughput is multiplied by a discount factor $\mu \le 1$. Under these conditions, the system is not able to provide the desired \gls{qos} regardless of the value of $\mu$ and $C\ce$, demonstrating the need to account for the control errors during the \gls{ra}, eventually optimizing the value of $\mu$\footnote{Remark that the \gls{ce} of the direct channels is influenced by $N_p$ only, which should also be set to a proper value in a real system.}. When a low value of $\mu$ is used, $\mu R_k(t) \le C_k(t)$ is probable, but the resulting throughput is insufficient to follow the arrival rate process. When $\mu$ is close to 1, it is likely that $\mu R_k(t) > C_k(t)$, resulting in a failed offload operation on slot $t$. Also, the results confirm the expected: 
a small codebook leads to worse \gls{ce} performance but higher payload time while increasing $C\ce$ leads to a lower payload time but better \gls{ce} performance.

\section{Conclusions}
In this paper, we have analyzed the control aspects related to \gls{mec} aided by \gls{ris}. Our investigation has highlighted the necessity of considering the various control operations when designing optimization algorithms for these services. In particular, the results demonstrate that the \gls{ce} procedure has the dominant impact on the \glspl{ue} \gls{qos}, generating the highest overhead and greatly influencing the \gls{ra} due to the \gls{csi} errors. Future works will capitalize on these findings, investigating possible robust \gls{ra} strategies along with the control operations.

\bibliographystyle{IEEEtranNoUrl}
\bibliography{IEEEabbr, main}

\end{document}